\newtheorem{thm}{Theorem}
\newtheorem{ax}{Axiom}
\newtheorem{defn}{Definition}
\newtheorem*{example*}{Example}
\DeclareRobustCommand{\ubar}[1]{\underaccent{\bar}{#1}}
\newcommand{\argmax}{\operatornamewithlimits{argmax}}
\newcommand{\lmax}[2]{\argmax(#1,#2)}
\newcommand{\R}{\mathbb{R}}
\newcommand{\gB}{\mathcal{B}} 
\newcommand{\cM}{\mathcal{M}}
\newcommand{\C}{\bm{c}}
\title{Revealed Preference Analysis Under Limited Attention\thanks{
We would like to thank Yuta Inoue and Koji Shirai for providing the data set used in the paper.
We would also like to thank Joshua Lanier, Yusufcan Masatlioglu, and Koji Shirai for providing valuable comments.}}
\author{Mikhail Freer\thanks{Department of Economics, University of Essex.  e-mail: \texttt{m.freer@essex.ac.uk}} \and Hassan Nosratabadi\thanks{ECARES, Universit\'e Libre de Bruxelles. e-mail: \texttt{seyed.nosratabadi@ulb.be}}}
\begin{document}

\maketitle

\begin{abstract}
An observer wants to understand a decision-maker's welfare from her choice. She believes that decisions are made under limited attention. 
We argue that the standard model of limited attention cannot help the observer greatly. 
To address this issue, we study a family of models of choice under limited attention by imposing an \textit{attention floor} in the decision process. 
We construct an algorithm that recovers the revealed preference relation given an incomplete data set in these models. 
Next, we take these models to the experimental data.
We first show that assuming that subjects make at least one comparison before finalizing decisions (that is, an attention floor of 2) is almost costless in terms of describing the behavior when compared to the standard model of limited attention.
In terms of revealed preferences, on the other hand, the amended model does significantly better. We can not recover any preferences for 63\% of the subjects in the standard model, while the amended model reveals some preferences for all subjects. In total, the amended model allows us to recover one-third of the preferences that would be recovered under full attention. 
\end{abstract}

\section{Introduction}
An observer wants to retrieve reliable information about the preferences of a decision maker (henceforth, DM) from her choice. If DM is rational -- i.e., maximizes a complete and transitive preference relation -- then the task is straightforward. She concludes that the chosen alternative in a budget set is preferred to all the alternatives that are available but not chosen. However, this conclusion is only valid if DM pays attention to all of the alternatives in a problem, which is hardly satisfied in reality. This behooves the observer to drop the assumption of full attention in practice.

\cite{masatlioglu2012revealed} proposes a parsimonious model of choice under limited attention. In this model, DM is endowed with a rational preference relation, but only considers a subset of a given budget set. She chooses the best alternative among the ones she considers. The observer, generally speaking, may not know the exact manner in which DM's consideration sets are formed. However, if the behavioral pattern of interest is inattention, then a fundamental property should be that removing an alternative that DM does not pay attention to does not affect her choice. Consideration sets that satisfy this property are referred to as \emph{attention filters}, and the choice procedure described by a pair consisting of an attention filter and a rational preference relation is referred to as choice under limited attention, by \cite{masatlioglu2012revealed} (henceforth, the generic CLA).

How much do we learn about DM's preferences by observing her choice if we assume she follows the generic CLA model? Consider the following choice observations: $\C \{x,z\} = x$, $\C \{x,y,w\} = x$, and $\C \{y,w\} = y$. There is no violation of rational choice in these observations. Let us first see what we learn if DM is fully attentive. We would infer that $x$ maximizes DM's welfare since it is chosen over all other alternatives at some budget set. How does $x$ do if we instead assume that her attention is limited in the sense of the generic CLA? We would not retrieve \emph{any} information on how good $x$ is. This is because we could see these observations  in the following way: DM is disregarding everything except the chosen alternative, and thus her preference relation is completely unidentified.\footnote{It is easy to check that this interpretation is consistent with the definition of attention filters.} In other words, the \textit{welfare optimum} under full attention is \textit{no better than any other alternative} under the generic CLA model.

We should argue next that the example above is not just an abstract construct. In fact, there are two structural channels in the generic CLA model that restrict welfare analysis. The first channel is due to the fact that the \say{only paying attention to the choice alternative} interpretation is a valid one under the generic CLA model for any observation that does not violate rationality. At the same time, the experimental evidence normally finds a sizable proportion of subjects to be rational. Thus the first challenge is that no revealed preference analysis is possible for the sizable rational patterns in the data.\footnote{
This proportion, for example, is 75\% in \cite{manzini2010choosing}, 78\% in \cite{huber1982adding}, 60\% in \cite{doyle1999robustness}, and 33\% in \cite{inoue2018limited}.} The second channel is caused by the fact that, not only welfare revelation is conditioned on choice inconsistencies in the generic CLA model, but more specifically on a particular type of violation. \cite{masatlioglu2012revealed} characterizes these instances to be those where removing an alternative from a problem produces a choice reversal. To summarize, we can retrieve information about DM's welfare \textit{if, and only if}, we observe choice inconsistencies of this latter form.


Is there a way to avoid this collapse in welfare relevance without sacrificing the plausible assumption of limited attention? We answer this question in this paper by proposing what we perceive (and then confirm) to be a \emph{safe} adjustment to the generic CLA model. We argue that the \say{only paying attention to the choice alternative} interpretation is not a fair description of most economic applications since it, essentially, removes the act of \emph{comparison} from the decision process. In our view, DM faces trade-offs in most choice scenarios. Hence, we make the orthodox assumption that she, at least, makes one comparison in her choice procedure; that is, pays attention to at least two alternatives. 

How much welfare relevance does this assumption buy us? Let us go back to the example above: $\C \{x,z\} = x$, $\C \{x,y,w\} = x$, and $\C \{y,w\} = y$. In the first budget set, DM considers both alternatives and thus reveals that she likes $x$ better than $z$. Her choice from the second budget set tells us that $x$ is better than $y$ \textbf{or} that $x$ is better than $w$. However, we do not know exactly which preference holds. If $x$ is indeed better than $y$, then the fact that $y$ is chosen over $w$ in the last pairwise observation -- is better than $w$ -- would imply that $x$ is also better than $w$. This would mean that $x$ is the revealed best. The reality nonetheless, may be less promising. It could be that $x$ is better than $w$ in the second observation, in which case $x$ is guaranteed to be the second-best alternative. Thus, we can say that $x$ is either the first or the second-best alternative. That is a significant improvement over the generic CLA model which provided no help in finding the welfare optimum.

Following the latter example, in this paper we propose a hierarchy of limited attention models by imposing an \textit{attention floor} in the decision process; that is, requiring DM to pay attention to a minimum number of alternatives before finalizing her choice. We refer to the CLA model in which DM pays attention to at least $k$ alternatives as the $k^{th}$-order CLA model, which incorporates the generic CLA as the first-order. 
Note that, assuming an attention floor of $k$ is equivalent to assuming that DM is making at least $k - 1$ comparisons.

We first characterize these models based on complete data sets (all subsets of the universal set of alternatives are observed). While we believe complete data sets are hardly attained in reality, we think an axiomatic study of the models given complete data sets teaches us the fundamentals of the behavior of DM were she to follow our model. Our characterization, in essence, combines an axiom from rational choice (SARP$^k$) with another from choice under limited attention (WARP(LA$^{k}$). The former establishes that DM should act rationally if the size of the budget set she faces does not surpass her attention floor. The latter concerns her choice on the remainder of budget sets, where she may exhibit inconsistencies. This axiom requires these violations to only occur if her attention set is affected.

Having established the axioms that address the fundamentals of DM's behavior, we then turn our attention to the practical case where the data set in hand is incomplete (only some subsets of the universal set are observed). Taking the common approach from the revealed preference theory with incomplete data sets, which dates back to \cite{afriat1967construction} \citep[see][for an overview of the literature]{chambers2016revealed}, we provide a revealed preference test for our models. In particular, we characterize the $k^{th}$-order CLA model using the mixed-integer programming which is a computationally tractable algorithm.

We then tackle our original question which is welfare analysis under limited attention. A well-known phenomenon in models such as the one we study is multi-representation: one can find multiple (preference, attention filter) pairs that rationalize the data. The revealed preference relation in such cases is naturally defined as those preference rankings that appear in all such representations. We provide an algorithm to recover the revealed preference relation in our models. To see the kind of challenges that our algorithm needs to overcome, let us dig a bit more into the issue of multi-representation. There are two sources for this issue in our setup. The first source is rather straightforward. Since we work with incomplete data sets, the preferences of DM on alternatives she never chooses will not be identified. This is an standard issue that also appears under rational choice and mainly concerns ``out-of-sample'' predictions. The second factor is however more subtle and due to the behavioral aspects of our model. In the $k^{th}$-order CLA model, a significant amount of revealed preference information come in the form of \textbf{or}. Consider our previous example once again: $\C \{x,z\} = x$, $\C \{x,y,w\} = x$, and $\C \{y,w\} = y$ . Assume that the theory under which we would like to do revealed preference analysis is the $2^{nd}$-order CLA. Then we know that $x$ is better than $z$ (first observation) and that $y$ is better than $w$ (last observation). The second observation however is an \textit{uncertain} revealed preference information: $x$ is better than $y$ \textbf{or} $x$ is better than $w$. This last piece of information does not \textit{directly} enrich the revealed preference relation since it contains ``or''. However, when combined with other information, it does. We know that $x$ is better than $y$ or $x$ is better than $w$. If $x$ is better than $y$, then since $y$ is better than $w$ we would conclude that $x$ is better than $w$. That is, $x$ is revealed preferred to $w$ in any case. Our algorithm manages to find these indirectly revealed preferences that are obtained from combining certain and uncertain information.



We then proceed to apply our model to the experimental data from \cite{inoue2018limited} to see (i) how restrictive is imposing attention floors in the decision process (descriptive power), and (ii) how much welfare relevance one could actually gain from it. 
We find evidence that the assumption could significantly boost the welfare relevance with, interestingly, almost no cost in the descriptive power. An illustrative example is when we move from the first-order CLA model (generic) to the second-order (at least one comparison). For the first-order CLA no preferences can be recovered for 63\% of subjects, and on average we can only reveal one comparison per person. For the second-order CLA, at least 5 (out of 45 possible) comparisons for all the subjects are recovered with the average number of comparisons per subject being 9 (out of 45 possible). 
All in all, setting the attention floor to 2, we can recover one-third of all the comparisons that are revealed under full attention (best case scenario for recovering preferences). This boost in welfare relevance is almost costless in terms of the descriptive power: less than one percent of the subjects who pass the test for the first-order CLA, fail the test for the second-order CLA.


We believe our paper is the first to approach the phenomenon of limited attention from the aspect of welfare relevance (that is \textit{how much} welfare analysis one can do), especially in the case of incomplete data sets. Indeed, \cite{masatlioglu2012revealed} itself mentions that if one assumed that the attention sets contain at least two alternatives, then one would fully retrieve the preferences of DM from observations over doubletons. This conclusion only works if the observer has access to a data set which is too large even for an experimental setting with a large number of alternatives.\footnote{
    With 10 alternatives, one would need 45 observations only on doubletons. A complete data set will consist of 1013 problems.
}

A number of variations of choice under limited attention have been studied in the literature. Among these, there are models that directly impose size restrictions on the attention spans. However, these models mirror our notion and consider \textit{CLA with a capacity limit} instead. Two examples are \cite{geng2021shortlisting} and \cite{geng2022limited} where an ``attention cap'' is imposed on the first stage of decision-making. Our concept, instead is that of ``attention floor''. Another related paper is \cite{barbera2019order} where the authors develop a model of order-$k$ rationality in which DM is assumed to be endowed with a preference relation, but chooses any alternative that is ranked in the top-$k$ based on her preference relation in a given choice problem.

Another important feature of our paper is that we consider incomplete data sets which have recently received more attention in the literature. \cite{de2021bounded} characterizes the generic CLA model on incomplete data sets and also illustrates the challenges that the multi-representation problem brings in this model. Another important results is that of \cite{inoue2018limited} that characterizes several variations of the limited attention model proposed in the literature though on incomplete data sets.\footnote{Let us also note that \cite{dean2017limited} and \cite{maniquet2022welfare} study other behavioral models on incomplete data sets.}\cite{inoue2018limited} also does an empirical investigation into the descriptive power of these models. Let us also mention another paper that characterizes choice under limited consideration with incomplete data set, which is \cite{demuynck2018revealed}. However, these authors consider a setting with prices which is a version that avoids the multi-representation problem.

Finally, our paper also contributes to the growing literature on behavioral welfare economics and, particularly, the seminal work of \cite{bernheim2009beyond}. These authors adopt a model-free approach and focus on constructing a conservative notion of welfare based on observed choice. However, this approach would generate a coarse measure of welfare \citep[see][]{rubinstein2012eliciting,manzini2014welfare}. Moreover, \cite{masatlioglu2012revealed} points out a potential welfare conflict between \cite{bernheim2009beyond}'s model-free approach and a model-based one such as \cite{masatlioglu2012revealed}. Of course, since our model covers the generic CLA model as a special case, this critique carries over to our paper.\footnote{
For other examples of behavioral welfare economics without a model see \cite{chambers2012choice}, \cite{apesteguia2015measure}, and \cite{nishimura2018transitive}. See \cite{manzini2014welfare} for an argument for model-based approaches in behavioral welfare economics.} Let us also note that the empirical performance of some of these welfare notions have been investigated by \cite{bouacida2021predictive}.

The remainder of the paper is organized as follows.
Section \ref{sec:def} defines the $k^{th}$-order CLA model.
Section 3 provides the theoretical results: characterization on complete and incomplete data sets as well as the algorithm to recover the revealed preference relation.
Section 4 presents the experimental application.
Section 5 concludes.
All proofs are provided in the Appendix.

\section{Definitions}\label{sec:def}
Let $X$ be a finite universal set of alternatives. Let $\gB\subseteq 2^X$ be a collection of non-empty budget sets, where $2^X$ is the set of all subsets of $X$. $\C:\gB\rightarrow X$ is a \emph{choice function} if $\C(B)\in B$. Thus a \emph{data set} can be described as the pair $(\gB,\C)$. A binary relation $\succ\; \subseteq X\times X$ is a strict preference relation if it is complete (all elements are comparable), transitive ($x\succ x'$ and $x\succ x'$ implies $x\succ x''$ for any $x,x',x''\in X$), and asymmetric ($x\succ x'$ implies $x'\not\succ x$ for any $x,x'\in X$). 

Following the prominent paper of \cite{masatlioglu2012revealed}, we define \emph{attention filter} as $\Gamma: 2^X\rightarrow 2^X$ such that $\Gamma(B)\subseteq B$ and $\Gamma(B) = \Gamma(B\setminus\{x\})$ for every $B\in 2^X$ and $x\notin \Gamma(B)$. According to this model when DM suffers from limited attention, she chooses the best alternative according to her preference relation, $\succ$, from the subset that she pays attention to according to her attention filter, $\Gamma$. 

In the latter model, therefore, DM may for example only pay attention to the alternative that she is choosing ($\Gamma(B) = \C(B)$). 
As mentioned in the introduction we aim to make an adjustment to this theory so that it \emph{necessitates comparisons before making a final decision}. 
This would be equivalent to imposing an ``attention floor'', that is, a minimum threshold for the number of alternatives DM pays attention to before making her choice. Let us elaborate on this. Note that if this floor is set to 1, then DM would not need to make any comparisons. If, on the other hand, it is set to 2, then she makes at least one comparison. How about 3? Since she chooses the best from the alternatives she pays attention to, she would only need to make two comparisons here. For example, assume that $a,b$, and $c$ are the alternatives DM pays attention to. She can start by comparing $a$ and $b$ and then compare the ``winner'' with $c$. Transitivity would guarantee that she finds the best among the three alternatives this way. This line of reasoning means that the assumption that DM's attention floor is $k$ is equivalent to the assumption that DM makes at least $k-1$ comparisons. On this note, let us formally present the model.

\begin{defn}\label{def:1} A data set $(\gB,\C)$ is \textbf{rationalizable with $k^{th}$-order choice under limited attention} ($k^{th}$-order CLA) if there exist
\begin{itemize}
    \item [--] a strict preference relation $\succ$, and
    \item [--] an attention filter $\Gamma$ with the property that $|\Gamma(B)| \geq min\{|B|,k\}$ for all $B$,
\end{itemize} and such that
$$
\C(B) = \lmax{\succ}{\Gamma(B)}
$$
for any budget set $B\in \gB$.\footnote{
    Note that in this definition there is a uniform floor of the attention filter's size across menus. All the results presented can be immediately generalized to the case with heterogeneous caps. Thus, it is also possible to consider the theories with relative (or mixed) restrictions on the cardinalities of the attention filters. For instance, one can assume that DM makes at least one comparison and considers at least 10\% of the alternatives present in any budget set.}
\end{defn}

\noindent
According to Definition \ref{def:1}, the model of \cite{masatlioglu2012revealed} is the \emph{first-order CLA} and lies on one extreme in this hierarchical division. As the order of CLA representation increases, DM makes more comparisons in her decision process. In the other extreme, we have rational choice (fully attentive) when $k\ge \max_{B\in \gB} |B|$.

\section{Theoretical Results}\label{sec:chr-c}
We start by characterizing the class of models proposed in Definition \ref{def:1} under the assumption that the observer has access to a complete data set. That is the data set that includes all the possible choice problems DM could face from a universal set of alternatives. Having access to complete data sets, of course, is seldom the case. However one can learn important insights into the underlying principles of DM's behavior from axioms that operate on complete data sets. 
In particular, $k^{th}$-order CLA can be perceived as a ``convex combinatio'' between the CLA model and rational choice model. This fact is well illustrated by the axiomatization we provide here, where one axiom comes from restricting revealed preference axiom for rational choice and other comes from adjusting the revealed preference axiom under limited attention (see \cite{masatlioglu2012revealed}).

After dealing with this scenario, we move to the practical case, i.e. incomplete data sets, and show how an observer can see whether or not a given behavior is an instance of $k^{th}$-order CLA model. Note that, here we are focused on providing an empirical test rather than understating the underlying fundamentals of decision making. In the last part of this section, we define the revealed preference relation under limited attention and provide an algorithm that recovers this relation in a computationally efficient manner.

\subsection{Complete Data Set}
There are two underlying fundamentals in our DM's behavior. The first one accounts for DM's behavior on those sets where she is fully attentive; that is, where the size of the budget set she faces does not exceed her attention floor. On such budgets, she should act completely rational. Hence, her behavior should satisfy the \emph{strong axiom of revealed preference}; i.e., should not exhibit any \emph{direct} and \emph{indirect} cycles in revealed preferences. To implement this axiom in our setting, we use the following terminology. We say $x_1$ is \emph{$k^{th}$-order directly preferred to} $x_2$ if there exists a budget set $B\in \gB: |B|\leq k$ containing $x_1,x_2$ where $x_1$ is chosen. We then say, $x_1$ is \emph{k$^{th}$-order indirectly preferred to $x_n$} if there exist $\{x_2,x_3\ldots,x_{n-1}\}$ for some $n\ge 2$, such that $x_i$ is $k^{th}$-order directly preferred to $x_{i+1}$. Then we have the following axiom.

\begin{ax}\textbf{$k^{th}$-Order SARP (SARP$^k$)} If $x$ is $k^{th}$-order indirectly preferred to $y$, then $y$ is not $k^{th}$-order directly preferred to $x$.
\end{ax}

\noindent
It is worthy to make the following observation on SARP$^k$. If DM's attention floor is $2$, then SARP$^k$ only applies to doubletons where it imposes an acyclicity condition. Thus, SARP$^k$ reduces to the well-known \emph{no-binary-cycles} axiom in the special case of $k=2$. Of course, as her attention floor expands, SARP$^k$ demands more than no-binary-cycles. In the extreme case where DM is fully attentive in all problems -- i.e., rational -- SARP$^k$ would be equivalent to the classical SARP.

The second fundamental, naturally, pertains to DM's choice on those problems where she faces more alternatives than she pays attention to. In these cases, DM does not have to be fully rational. But still, some level of consistency is expected. To elaborate, note that rational choice admits to the following \emph{weak axiom} type condition: the choice in a problem is insensitive to removing any alternative other than the choice alternative itself. While in our model DM may generally violate this principle, she should still remain partially consistent with it. To be precise, DM's choice should be insensitive to the removal of an alternative that she \emph{does not pay attention to}, since such a change does not influence her attention set. But how can we identify if an alternative is not paid attention to in a given budget set? This can be ensured by concentrating on those problems where she is fully attentive. That is, budget sets that have at most $k$ alternatives. If she chooses an alternative $y$ over $x$ in a budget set where she is fully attentive -- which in turn means that she prefers $y$ to $x$ -- then she can only choose $x$ over $y$ in another problem if she is not paying attention to $y$. Hence, removing $y$ in this latter problem should not induce a choice reversal. This idea is captured in the next axiom.

\begin{ax} \textbf{WARP for $k^{th}$-Order Limited Attention (WARP(LA$^k$)})
For $x,y \in S\cap T$ with $|T| \leq k$, if $x = \C(S)$ and $y = \C(T)$, then $x = \C(S\setminus\{y\})$.
\end{ax}

\noindent
Recall that, when DM is fully attentive in all budgets, then SARP$^k$ is equivalent to classic SARP which, in turn, is sufficient to characterize the choice. Thus, logically speaking, WARP(LA$^k$) must be ``void'' in this case. This is true. If DM's behavior is rational then the main trigger for WARP(LA$^k$) -- that is, choice reversals going form one menu to the other -- is never observed. \

The two axioms above are necessary and sufficient to characterize $k^{th}$-order limited attention model, as stated in the next theorem.

\begin{thm}\label{thm:fulldata}
A complete data set is rationalizable with $k^{th}$-order CLA ($k\geq 2$) if and only if it satisfies SARP$^k$, and WARP(LA$^k$).
\end{thm}

\subsection{Incomplete Data Sets}\label{sec:chr-inc}
In this section, we turn our attention to incomplete data sets. The seminal work of \cite{afriat1967construction} on revealed preference theory utilizes linear programming to implement revealed preference tests on any given data set (complete or incomplete). The major advantage of using linear programming is computational feasibility. However, the limited attention theory is not compatible with the standard linear programming techniques. The major difference is that in the standard revealed preference one can draw inferences with ``certainty'' from choice.
That is, 
$$
\text{if}\; x = \C(B_i) \text{ and } y \in B_i, \; \text{then}\; x \; \text{is better than}\; y.
$$
However, revealed preference inference may be \textit{uncertain} under limited attention model. There are two reasons for this uncertainty. The first is illustrated by \cite{de2021bounded}. These authors argue that revealed preferences of the following form are critical in characterizing the generic CLA model on an incomplete data set. 

\begin{align*}
\C(B_j) \in B_i, \ \C(B_i) \in B_j, & \text{ and } \C(B_i)\ne \C(B_j) \text{ then at least one of the following holds:} \\ 
& \text{(i) there is } y\in B_i\setminus B_j \text{ such that } y\in \Gamma(B_i), \text{\bf or } \\
& \text{(ii) there is } z\in B_j\setminus B_i \text{ such that } z\in \Gamma(B_j). 
\end{align*}

\noindent
Note that this revealed preference information is uncertain because it is expressed using \textbf{or}. The expression is indeed a direct result from the property that defines an attention filter. If we remove an alternative that is not being paid attention to, then the attention filter should not change. Thus, if we observe the choice inconsistency between $B_i$ and $B_j$, then it has to be the case that while moving from one budget to another (either from $B_i$ to $B_j$ or $B_j$ to $B_i$) we should remove something that is being paid attention to. (otherwise the choice should not change in this process which can not be the case since we observe a reversal)
This means that, revealed preferences under limited attention may come in \textbf{or} form. In addition to this observation from \cite{de2021bounded}, we should note that $k^{th}$-order CLA model directly adds another channel for the ``or-logic'' in revealed preference. For example, if we observe $\C\{x,y,z\} = x$ and we believe that DM's attention floor is $2$, then we will have revealed preference information in the form: $x \succ y$ or $x \succ z$. 

To account for the presence of the or-logic in the formulation of revealed preferences, we employ mixed integer programming (MIP). More precisely, we amend the linear program with some integer (binary) variables to deal with the or-logic. Let us introduce the notation necessary to introduce the MIP. We enumerate budgets in $\gB$ by $B_i$, and let $u_i>0$ be the utility value of the chosen alternative in $B_i$ ($\C(B_i)$). The second set of variables are the integers $\delta_{ij}\in \lbrace 0,1\rbrace$ defined as $\delta_{ij}=1$ if $x_j\in \Gamma(B_i)$ and $\delta_{ij}=0$ otherwise. Hence, $\delta_{ij}=1$ if an alternative chosen in $B_j$ ($\C(B_j)$) is paid attention to in $B_i$. Finally, we introduce the big-M ($M\rightarrow \infty$) that is a technical construct to implement the or-logic in the program, and also denote by $b_i$ the cardinality of $B_i$. 
We can state the mixed integer program (MIP) as follows.
\begin{equation}
\label{eq:MIP}
    \begin{cases}
    u_i > u_j - (1-\delta_{ij})M & \text{ if } x_i = \C(B_i), x_j \in B_i, \text{ and } x_i \ne x_j
    \\
    \delta_{ij} = 0 & \text{ if }  x_j \notin B_i
    \\
    \delta_{ij} = 1 & \text{ if }  x_j = \C(B_i)
    \\
    \sum\limits_{x_k \in B_i\setminus B_j}\delta_{ik} + \sum\limits_{x_k \in B_j\setminus B_i}\delta_{jk} \ge 1 & \text{ if } \C(B_i)\ne \C(B_j), \ \C(B_i),\C(B_j)\in B_i\cap B_j
    \\
    \sum\limits_{j\ne i} \delta_{ij} \ge \min\lbrace k,b_i\rbrace & 
    \end{cases}
\end{equation}

System \eqref{eq:MIP} presents a straightforward test for the class of models we constructed. The first line implements the idea that the chosen alternative is better than any other alternative in the consideration set. 
The second line states that DM only considers alternatives if they are in $\Gamma(B_i)\subseteq B_i$. The third line guarantees that DM is paying attention to the chosen alternative. The forth line controls for the important observation on limited attention emphasized by \cite{de2021bounded}. Note that the first four lines of the system \eqref{eq:MIP} together specify a MIP formulation for the \cite{de2021bounded} test of the generic CLA model. Thus, we only need to include the part corresponding to the attention floor. This is done by the last line.


It is evident that the conditions in system \eqref{eq:MIP} are necessary for rationalizability of the data with $k^{th}$-order CLA model. The following theorem shows that they are also sufficient.

\begin{thm}
\label{prop:RationalizabilityIncompleteExperiment}
A data set $(\gB,\C)$ is rationalizable with $k^{th}$-order CLA if and only if there is a solution to equation \eqref{eq:MIP}.
\end{thm}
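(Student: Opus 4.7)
The plan is to prove both directions of the equivalence, leveraging the observation (noted in the paper) that lines 1--3 of \eqref{eq:MIP} already form the MIP formulation of \cite{de2021bounded}'s test for the generic (first-order) CLA model. Line 4 only encodes the additional cardinality restriction on $\Gamma$, so the bulk of the argument reduces to verifying that this single constraint faithfully captures $|\Gamma(B_i)| \geq \min\{|B_i|, k\}$.

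For necessity, suppose $(\gB, \C)$ is rationalized by a pair $(\succ, \Gamma)$. I would take any strict utility representation $u$ of $\succ$ (which exists since $X$ is finite and $\succ$ is a strict linear order), set $u_i := u(\C(B_i))$, and define $\delta_{ij} := 1$ iff $x_j \in \Gamma(B_i)$. Lines 1, 2, and 4 are then immediate consequences of, respectively, $\C(B_i) = \max_{\succ} \Gamma(B_i)$, $\Gamma(B_i) \subseteq B_i$, and the cardinality assumption on $\Gamma$. The substantive step is line 3: if $\C(B_i) \ne \C(B_j)$ with both chosen alternatives in $B_i \cap B_j$, and if the line 3 sum were $0$, then $\Gamma(B_i), \Gamma(B_j) \subseteq B_i \cap B_j$; iterating the attention filter property (removing unnoticed elements one at a time) would yield $\Gamma(B_i) = \Gamma(B_i \cap B_j) = \Gamma(B_j)$, forcing $\C(B_i) = \C(B_j)$, a contradiction.

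For sufficiency, given a feasible $(u, \delta)$, I would define $\succ$ as the strict order induced by $u$ (breaking any ties arbitrarily via a small perturbation, which is safe since $X$ is finite) and set $\Gamma(B_i) := \{x_j : \delta_{ij} = 1\}$ for each observed budget. Lines 1--3 together with \cite{de2021bounded}'s construction ensure that $\Gamma$ admits an extension to a valid attention filter on all of $2^X$ satisfying $\C(B_i) = \max_{\succ} \Gamma(B_i)$ on $\gB$. Line 4 gives the cardinality bound on observed budgets, and this bound propagates through the canonical extension: for any $A$ reachable from some $B_i \in \gB$ by successive removal of non-attended elements, $\Gamma(A) = \Gamma(B_i)$, so $|\Gamma(A)| = |\Gamma(B_i)| \geq \min\{|B_i|, k\} \geq \min\{|A|, k\}$ since $|A| \leq |B_i|$; for subsets not reachable from any observed budget, setting $\Gamma(A) = A$ trivially satisfies the bound.

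The main obstacle is the extension of $\Gamma$ from $\gB$ to all of $2^X$ while simultaneously preserving the attention filter property and the cardinality bound. The former is the content of \cite{de2021bounded}'s construction, which relies precisely on line 3 to rule out incompatible assignments. The latter is easier but requires a check: the key observation is that in the canonical extension the cardinality of $\Gamma$ never decreases while the ambient set can only shrink, so the bound $|\Gamma(A)| \geq \min\{|A|, k\}$ is automatically preserved. No genuinely new machinery is required beyond the \cite{de2021bounded} argument; the contribution of Theorem \ref{prop:RationalizabilityIncompleteExperiment} lies in recognizing that the cardinality requirement is cleanly implementable as a single linear constraint on the $\delta$ variables.
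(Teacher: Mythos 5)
Your necessity direction matches the paper's: same utility assignment, same $\delta$ construction, and the same contradiction argument for line 3 via the attention-filter property applied to $B_i\cap B_j$. The sufficiency direction, however, has a genuine gap in the construction of the attention filter. You set $\Gamma(B_i):=\{x_j:\delta_{ij}=1\}$ on observed budgets and then extend. But the raw $\delta$-sets of a feasible MIP solution need not satisfy the attention-filter axiom even across \emph{observed} budgets. Concretely, take $k=2$, $S=\{a,b,c,d\}\in\gB$, $S'=\{a,b,c\}\in\gB$, $\C(S)=\C(S')=a$, and a feasible solution with $\delta$ indicating $\{a,b\}$ for $S$ and $\{a,c\}$ for $S'$: line 3 never triggers (the choices coincide), lines 1, 2, 4 are satisfiable, yet $d\notin\Gamma(S)$ while $\Gamma(S\setminus\{d\})=\{a,c\}\neq\{a,b\}=\Gamma(S)$. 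This is exactly why the paper does \emph{not} use the $\delta$-sets as the filter: it first completes the revealed preference to $\succ^*$ and then defines the enlarged filter $\Gamma(B)=\{y\in B:\C(B)\succ^* y\}\cup\{\C(B)\}$ on observed budgets, which depends only on $\C(B)$ and $\succ^*$ and therefore coincides across budgets with the same choice; it contains the $\delta$-set, so the cardinality bound is inherited.

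Two smaller issues compound this. First, your ``canonical extension'' ($\Gamma(A)=\Gamma(B_i)$ for $A$ with $\Gamma(B_i)\subseteq A\subseteq B_i$) is ill-defined when $A$ is sandwiched under two observed budgets with different filters but the same choice (line 3 only forbids the configuration when the choices differ); the paper resolves this by taking the union over all qualifying parents and then verifies the filter axiom in four cases ($S$ and/or $S'$ observed or not), which is the bulk of its proof. Second, outsourcing the extension to the construction of de Clippel and Rozen does not quite close the argument, because their extension is not built to respect the cardinality floor; the paper's fallback $\Gamma(S)=S$ for unreachable sets is chosen precisely for that reason. Your observation that the bound propagates along reachable sets via $|\Gamma(A)|\ge\min\{|B_i|,k\}\ge\min\{|A|,k\}$ is correct and is actually a cleaner way to state the paper's Cases 1--2 for the cardinality check, but it only becomes available once the filter itself is correctly and unambiguously defined.
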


\subsection{The Revealed Preference Relation}
It is known and also straightforward to argue that one runs into multi-representation problem in our setup; that is, one can find multiple pairs of (preferences, attention filter) that rationalize an observed behavior. In presence of multi-representation, the revealed preference relation is naturally defined by the rankings that appear in all representations. To formalize this idea for the model we study, let $\cM(\gB,\C)= \{(\succ, \Gamma) \}$ be the set of pairs (preferences, attention filter) that rationalize the data set $(\gB,\C)$ according to $k^{th}$-order CLA. 
Then, the revealed preference relation is
$$
P^* = \bigcap_{(\succ,\Gamma)\in \cM(\gB,\C)} \succ;
$$
that is, the intersection of all possible preference relations that rationalize the data set. Note the conservative nature of this definition as it aims to report revealed preferences that are guaranteed to be true: $xP^* y$ if and only if $x \succ y$ for every $\succ$ that generates the data.




A brute-force way to find $P^*$ requires checking every $(x,y) \in X\times X$ to see whether $y \succ x$ hold for some $(\succ,\Gamma)\in \cM(\gB,\C)$. This algorithm becomes computationally intractable very quickly.
Thus, we need to restrict the potential number of comparisons. We use the following strategy to do so.
First, we consider a \textit{lower bound for $P^*$}: the transitive closure of all certain revealed preference information -- the ones that are not expressed in or-logic. 
Next, we find a ``minimal'' binary relation that rationalizes the data. 
This would serve as a \textit{upper bound for $P^*$} since any comparison $x P^* y$ has to appear in every rationalization. 
We know that $P^*$ lies between these two bounds. 
Thus, we are only left to check the comparisons that are in the upper bound but not in the lower, one by one, to see whether or not they are in $P^*$. 

\paragraph{Lower Bound.}
We start by constructing a lower bound for $P^*$ denoted by $\ubar P$. 
That is, if $x \ubar P y$, then $x P^* y$ ($P^* \subseteq \ubar P$). 
Given the definition of attention filters, we know that 
$$
\C(B) \neq \C(B\setminus \{ y\}) \text{ implies that } \C(B) \succ y
$$
for every $(\succ, \Gamma)\in \cM(\gB,\C)$.
Moreover, imposing an attention floor equal to $k$, we also retrieve certain revealed preference information whenever the cardinality of the set does not exceed $k$.
Finally, since $P^*$ is transitive, then every comparison that belongs to the transitive closure of $\ubar P$ also belongs to $P^*$.

\begin{defn}
\label{def:LowerBound}
Assume that the data set is rationalizable with $k^{th}$-order CLA. We define $x \ubar P y$ if and only if
\begin{itemize}
    \item [--] there exists $B$ such that $B, B\setminus \{y\} \in \gB$, and $x=\C(B) \ne \C(B\setminus \{y\})$, or
    \item [--] there exists $B\in \gB$ with $|B| \leq k$ such that $x=\C(B)$, $y\in B$, or
    \item [--] there exist $x=s_1, \ldots, s_n = y$ such that $s_j \ubar P s_{j+1}$ fore very $j\in \{1,\ldots, n-1\}$.
\end{itemize}
\end{defn}

The natural question next is whether there is some information about $P^*$ contained outside $\ubar P$. That is, can we elicit more information about $P^*$ beyond $\ubar{P}$ from the observed choice using \textit{uncertain} revealed preference information that are expressed in or-logic? Intuitively speaking, the answer is yes. For example, assume that we know with certainty that $x \succ y$ and $z \succ y$. Also, assume that we have the following uncertain revealed preferences: $t \succ x$ or $t \succ z$. Then we can combine the certain and uncertain revealed preferences to conclude, for certain, that $t \succ y$. Thus, the information contained in $\ubar P$ is not sufficient to recover $P^*$ fully.


\paragraph{Upper Bound}
By an upper bound for $P^*$ we mean a relation $\bar P$ such that if $x P^* y$, then $x \bar P y$ ($P^* \subseteq \bar P$). 
Note that any binary relation that rationalizes the data could work as an upper bound. Nevertheless, in order to minimize the number of checks between these bounds, it is important to construct a ``minimal'' upper bound. 
We do this by finding the minimal binary relation that is needed to rationalize the data. We proceed by minimizing the lower contour set for every alternative that has been chosen from at least one budget set. To do this, let us modify the system \eqref{eq:MIP} to construct a new MIP. 

Let $b_i = |B_i|$ be the cardinality of the set $B_i\in \mathcal B$. 
Let $s_i \in \{0,1\}$ be the indicator of whether $x_i$ is in the (weak) lower contour set of $x$. 
Let $\delta_{ij} \in \{0,1\}$ be the indicator that is equal to one if $x_j \in \Gamma(B_i)$ for $B_i\in \gB$ and zero otherwise. 
We have the following system.

\begin{equation}
\label{eq:LowerContourSetMIP}
    \begin{cases}
    \sum_i s_i \rightarrow \min & \\
    s_i = 1         & \text{ if } x_i = x 
    \\
    s_j + (1-\delta_{ij})M\ge s_i & \text{ if } x_i = \C(B_i), \ x_j \in B_i, \text{ and } x_i \ne x_j
    \\
    \delta_{ij} = 0 & \text{ if }  x_j \notin B_i
    \\
    \delta_{ij} = 1 & \text{ if }  x_j \notin \C(B_i)
    \\
    \sum\limits_{x_k \in B_i\setminus B_j}\delta_{ik} + \sum\limits_{x_k \in B_j\setminus B_i}\delta_{jk} \ge 1 & \text{ if } \C(B_i)\ne \C(B_j), \ \C(B_i),\C(B_j)\in B_i\cap B_j
    \\
    \sum\limits_{x_j\in B_i} \delta_{ij} \ge \min \lbrace b_i,k \rbrace &
    \end{cases}
\end{equation}

\noindent
Let $\hat s$ be the solution to system \eqref{eq:LowerContourSetMIP}. Then, we can define the minimal lower contour set of $x$ as
$$
\hat L(x) = \{x_i\in X: s_i = 1 \text{ and } x_i \ne x \}.
$$

\begin{defn}
\label{def:UpperBound}
Let data set be rationalizable with $k^{th}$-order CLA, then $x \bar P y$ if and only if $y \in \hat L(x)$. 
\end{defn}

\paragraph{Algorithm.}
Having defined lower and upper bounds for $P^*$ we now can improve the algorithm for recovering $P^*$. 
We start by making an important remark. We can initialize system \eqref{eq:MIP} with some known preferences. In particular, we know that if the data set is initially rationalizable, then it would still be if we initialize \eqref{eq:MIP} given $\ubar P$. Thus, the idea is to try adding reversed comparisons from $\bar P \setminus \ubar P$ to $\ubar P$ one by one and see whether the data set is rationalizable with this enriched basic preference relation (that is, $\bar P\cup \{(x,y)\}$) 
If the comparisons can be reversed in a non-contradictory manner, then it can not belong to $P^*$. Otherwise, it belongs. Once we tried all such comparisons the algorithm stops.

To formalize the algorithm let us use some auxiliary notation.
Let 
$$
\cM(\gB,\C,P) = \{ (\succ,\Gamma)\in \cM(\gB,\C) \text{ and } P\subseteq \; \succ \}
$$
be the set of pairs (attention filter, preferences) that rationalize the data such that the preference relation in the pair contains $P$. Note that in the language of $\cM$ the data set is rationalizable if and only if $\cM(\gB,\C) \ne \emptyset$. Similarly rationalizability given an initial revealed preference information $P$ is equivalent to $\cM(\gB,\C,P) \ne \emptyset$.

\begin{algorithm}[htb]
\caption{Recovering $P^*$}
\label{algo:EstimatingPreferences}
\begin{algorithmic}[1]
\Require $\cM(\gB,\C)\ne \emptyset$
\State $\triangle \leftarrow (\bar P \setminus \ubar P)$ 
\State $\hat P  \leftarrow \ubar P$
\While{$\triangle P \ne \emptyset$}
    \State $(x,y) \in \triangle \succ$
    \If{$\cM(\gB,\C, \hat P \cup \{(y,x)\})=\emptyset$}
        \State $\hat P \leftarrow \hat P \cup \{(x,y)\}$
    \EndIf
    \State $\triangle P \leftarrow \triangle P \setminus \{(x,y)\}$
\EndWhile
\end{algorithmic}
\end{algorithm}


\begin{thm}
\label{thm:CorrectAlgorithm}
Let $\hat P$ be the output of algorithm \ref{algo:EstimatingPreferences}.
If a data set is rationalizable, then $\hat P = P^*$.
\end{thm}

\section{Experimental Application}
We use the data from the experiment conducted by \cite{inoue2018limited}.\footnote{
    Let us note that this data set has the lowest pass rates (for rational behavior) among the data sets we referred to in the introduction (33\% of subjects are rational). Recall that if an observation is consistent with rational behavior, then the generic CLA has no welfare implications. Hence, in terms of welfare relevance, the generic CLA's performance would have been only worse if we used other data sets.
}
Let us present the essential details of the experiment. Subjects are asked to choose between bundles of intertemporal installments. Each bundle consists of three installments: in 1 month, in 3 months, and in 5 months. Each renumeration bundle consists of 2400 Japanese yen, split into three installments. The universal set consists of 10 alternatives presented in Table \ref{tab:Alternatives}. Each subject was presented with 20 different budgets to choose from. Budgets were generated to contain 2 to 8 alternatives. A total of 113 subjects (students of Waseda University, Japan) participated in the experiment that was run in 4 sessions.

\begin{table}[ht]
    \centering
    \begin{tabular}{l|cccccccccc}
         &  $x_1$ & $x_2$ & $x_3$ & $x_4$ & $x_5$ & $x_6$ & $x_7$ & $x_8$ & $x_9$ & $x_{10}$ \\ \hline
    in 1 month & 450 &  800 & 1150  & 450 &  450 & 800 & 850 & 1200 & 1550 & 500 \\
    in 3 months & 800 & 800 & 800 & 450 & 1500 & 1150 & 0 & 0 & 0 & 0 \\
    in 5 months & 1150 & 800 & 450 & 1500 & 450 & 450 & 1550 & 1200 & 850 & 1900 \\ \hline
    \end{tabular}
    \caption{Universal set of alternatives. Payments are in Japanese yen.}
    \label{tab:Alternatives}
\end{table}

\noindent
We present the analysis of the data in three parts. We start by testing different orders of the CLA model. This helps us to understand to what extent an order of the CLA theory is viable in explaining the observed behavior (descriptive power). Next, we evaluate the welfare relevance of the theory. Finally, we analyze to which extent we can use $\ubar P$ as a good approximation of $P^*$.


Given that the maximal size of the budget set is 8, we are going to present results for all 8 orders of the theory, which range from $\vert \Gamma \vert \ge 1$ to $\vert \Gamma \vert \ge 8$. The first version ($\vert \Gamma \vert \ge 1$) is equivalent to the generic CLA model of \cite{masatlioglu2012revealed}, and the last ($\vert \Gamma \vert \ge 8$) is equivalent to rational choice since all alternatives are considered. 


\subsection{Descriptive Power}

\begin{table}[htb]
\centering
\begin{tabular}{l|c|cc|cc}
\toprule
                             & Pass Rate                                                & \multicolumn{2}{c|}{Power}                                                                                                                                   & \multicolumn{2}{c}{PSI} \\
                             &                                                          & Bronars                                                                      & Bootstrap                                                                     & Bronars   & Bootstrap   \\ \hline
$\vert \Gamma \vert \ge 1$   & 1                                                        & \begin{tabular}[c]{@{}c@{}}.96\\ (.95; .97)\end{tabular}                     & \begin{tabular}[c]{@{}c@{}}.88\\ (.87; .91)\end{tabular}                      & .04       & .12         \\
$\vert \Gamma \vert \ge 2$ \ & \begin{tabular}[c]{@{}c@{}}.99\\ (.95; .99)\end{tabular} & \begin{tabular}[c]{@{}c@{}}.46\\ (.43; .49)\end{tabular}                     & \begin{tabular}[c]{@{}c@{}}.35\\ (.32; .38)\end{tabular}                      & .53       & .64         \\
$\vert \Gamma \vert \ge 3$   & \begin{tabular}[c]{@{}c@{}}.88\\ (.80; .93)\end{tabular} & \begin{tabular}[c]{@{}c@{}}.24\\ (.22; .27)\end{tabular}                     & \begin{tabular}[c]{@{}c@{}}.17\\ (.04; .07)\end{tabular}                      & .64       & .71         \\
$\vert \Gamma \vert \ge 4$   & \begin{tabular}[c]{@{}c@{}}.69\\ (.59; .77)\end{tabular} & \multicolumn{1}{l}{\begin{tabular}[c]{@{}c@{}}.01\\ (.01; .02)\end{tabular}} & \multicolumn{1}{l|}{\begin{tabular}[c]{@{}c@{}}.02\\ (.01; .03)\end{tabular}} & .68       & .67         \\
$\vert \Gamma \vert \ge 5$   & \begin{tabular}[c]{@{}c@{}}.61\\ (.51; .70)\end{tabular} & \multicolumn{1}{l}{\begin{tabular}[c]{@{}c@{}}.00\\ (.00; .01)\end{tabular}} & \multicolumn{1}{l|}{\begin{tabular}[c]{@{}c@{}}.00\\ (.00; .01)\end{tabular}} & .61       & .61         \\
$\vert \Gamma \vert \ge 6$   & \begin{tabular}[c]{@{}c@{}}.50\\ (.41; .60)\end{tabular} & 0                                                                            & 0                                                                             & .50       & .50         \\
$\vert \Gamma \vert \ge 7$   & \begin{tabular}[c]{@{}c@{}}.40\\ (.31; .49)\end{tabular} & 0                                                                            & 0                                                                             & .40       & .40         \\
$\vert \Gamma \vert \ge 8$   & \begin{tabular}[c]{@{}c@{}}.34\\ (.25; .43)\end{tabular} & 0                                                                            & 0                                                                             & .34       & .34    \\ \bottomrule    
\end{tabular}
\caption{Descriptive power. Pass rates and power computations contain the 95\% confidence intervals in the parenthesis. All confidence intervals are computed using the Clopper-Pearson procedure. Confidence intervals for values of 0 and 1 are omitted as those are not defined.}
\label{tab:PassRates}
\end{table}

\noindent
Table \ref{tab:PassRates} presents the descriptive power of different orders of CLA. Each row corresponds to an attention floor. The second column presents the pass rates; that is, the share of the subjects who pass the test for each order of the theory. We see that all subjects are consistent with the first-order CLA model (\cite{masatlioglu2012revealed}) and, 99\% are with the second-order CLA model (one comparison). 
Numbers start dropping rapidly as the attention span grows, going all the way down to 33\% for the case of full attention ($|\Gamma|\ge 8$).

Given that different tests impose different restrictions, we need to identify the \say{power} of the test. In words, we need to establish the probability that a random data set is consistent with $k^{th}$-order CLA model. We consider two standard power tests in the literature: Bronars’ power \citep[see][]{bronars1987power} and bootstrap power \citep[see][]{andreoni2013power}.
The difference between these tests is the distributional assumption on the random choice. 
Bronars' power creates random choice according to a uniform distribution among all available options. Bootstrap power uses the distribution of observed choice to generate random data set 
To compute each of these power indices, we generate 1000 random subjects and estimate their pass rate.
We observe that 96\% of random Bronars subjects and 88\% of random bootstrap ones are consistent with the first-order CLA model. Note that the change moving from the first-order to the second is drastic as the number of random subjects consistent with the model drops from 88-96\% to 35-46\% with further dropping to 17-24\% for $|\Gamma|\ge 3$. Starting with $\vert \Gamma \vert \ge 6$ none of the random subjects are passing the tests, showing that at this point our test becomes powerful enough to exclude \textit{false positives}. However, we do see that the pass rates among real subjects are also going quite low at this point.

Columns 5 and 6 present the predictive success indices \citep[PSI, see ][]{selten1991properties,beatty2011demanding}. PSI is computed by taking the difference between the pass rate and the power index. Therefore, it corrects the pass rate for the probability of false positive. PSI takes values between -1 and 1. If it is equal to -1, then all real subjects fail the test while all random subjects pass; and if it is equal to 1m then all real subjects pass the test while all random subjects fail. We present two versions of PSI for Bronars' and bootstrap power corrections.
Considering PSI performance, we see that the $3^{rd}$-order CLA ($|\Gamma|\ge 3$) and the $4^{th}$-order CLA ($|\Gamma|\ge 4$) models deliver the highest rates among all other models for bootstrap and Bronars' correspondingly. The second-order CLA model that only requires a single comparison performs comparably well to $|\Gamma|\ge 3$ and $|\Gamma|\ge 4$ delivering the PSIs of .53 using Bronars' and .64 for bootstrap controls. The PSIs for the first-order CLA model (.04 for Bronars and .12 for bootstrap) and fully attentive behavior (.34) are significantly lower than those for the $|\Gamma|\ge 4$, $|\Gamma|\ge 3$, and $|\Gamma|\ge 2$ theories.

\subsection{Welfare Relevance}
Next, we proceed with evaluating the welfare relevance of $k^{th}$-order CLA models. We consider two measures. The first measure is the relative density of the revealed preference relation. That is, the number of comparisons that are revealed under $k^{th}$-order CLA ($P^*$) as a share of the total number of comparisons a complete strict preference relation over $X$ would contain. For example, when the total number of alternatives is 4, then a total of 6 comparisons is contained in any complete strict preference relation. If $P^*$ reveals two, then our first measure reports: $2/6 = 33\%$. While being a straightforward measure of welfare relevance, the relative density nonetheless may be perceived as ``raw``, as it may not help the observer if she is concerned with finding, say, ``the best alternative''. We propose a second measure to address this. The second measure reports the maximal size of the lower contour sets across all alternatives; that is, it finds the alternative that dominates the largest number of other alternatives, and then reports to us that exact number. This measure therefore helps the observer to approximately identify the best alternative.


\begin{table}[ht]
\centering
\begin{tabular}{c|c|ccccc}
\toprule
& \multirow{2}{*}{\begin{tabular}[c]{@{}c@{}}Mean \\ (Standard Deviation)\end{tabular}} 
& \multicolumn{5}{c}{Quartiles of the Distribution} \\      
&                         & min      & 25\%      & 50\%     & 75\%     & max     \\   \midrule        
$\vert \Gamma \vert \ge 1$ & \begin{tabular}[c]{@{}c@{}}0.01\\ (0.02)\end{tabular}& 0   & 0   & 0    & 0.02     & 0.13       \\    
$\vert \Gamma \vert \ge 2$ & \begin{tabular}[c]{@{}c@{}}0.20\\ (0.07)\end{tabular}& 0.13  & 0.16 & 0.18  & 0.23  & 0.53   \\   
$\vert \Gamma \vert \ge 3$ & \begin{tabular}[c]{@{}c@{}}0.35\\ (0.08)\end{tabular}& 0.20 & 0.29 & 0.33  & 0.40 & 0.58       \\ 
$\vert \Gamma \vert \ge 4$ & \begin{tabular}[c]{@{}c@{}}0.44\\ (0.06)\end{tabular} & 0.29   & 0.40 & 0.42   & 0.47 & 0.62    \\     
$\vert \Gamma \vert \ge 5$ & \begin{tabular}[c]{@{}c@{}}0.53\\ (0.08)\end{tabular}& 0.42 & 0.47 & 0.49 & 0.62 & 0.76       \\  
$\vert \Gamma \vert \ge 6$ & \begin{tabular}[c]{@{}c@{}}0.61\\ (0.08)\end{tabular} & 0.42    & 0.53    & 0.60  & 0.69   & 0.78   \\    
$\vert \Gamma \vert \ge 7$ & \begin{tabular}[c]{@{}c@{}}0.62\\ (0.10)\end{tabular} & 0.42   & 0.53  & 0.60 & 0.71 & 0.80   \\        
$\vert \Gamma \vert \ge 8$ & \begin{tabular}[c]{@{}c@{}}0.63 \\ (0.09)\end{tabular} & 0.42   & 0.56    & 0.63 & 0.71   & 0.80         \\ \bottomrule               
\end{tabular}%
\caption{Relative density of the revealed preference relation.
Number of observed comparisons divided by the theoretical number of comparisons in the complete preference relations (45).
Numbers are only computed for the subjects consistent with the given theory.
}
\label{tab:RevaledPreferenceDensity}
\end{table}

Let us first look at the numbers with respect to our first measure presented in Table \ref{tab:RevaledPreferenceDensity}. Every row pertains to an attention floor. The second column presents the mean density (across subjects) of the revealed preference relation with standard deviation in parentheses. The columns after (in the right panel) present the quartiles of the distribution.
We start by making observations about the extremes.
For the generic CLA ($|\Gamma|\ge 1$) we see that there is no revealed preferences for more than 50\% of the sample (the exact number is 63\%) and no more than one comparison revealed for at least 75\% of the sample. In the other extreme, we have rational choice ($|\Gamma|\ge 8$) where we can, on average, retrieve 63\% of possible comparisons, with the maximum at 80\%. Thus, even under rational choice the complete preference relation is not recovered in this experiment. Considering the intermediate cases, if DM makes at least one comparison, then the theory is already powerful enough to recover 20\% of potential comparison on average. That accounts for about one- third of the best case scenario which is rational choice. Also, observe that the minimal density under $|\Gamma|\ge 2$ is equally large as the maximal density under the generic CLA. Considering the orders that perform best according to PSI -- $|\Gamma|\ge 3$ and $|\Gamma|\ge 4$ -- 35\% and 44\% of all comparisons on average are revealed which correspond to half and two-thirds of the densities under rational choice. Overall, we see that imposing attention floors allows us to reconstruct a significant chunk of underlying preferences. Once the minimal restriction goes up to five, we see that the difference between the density under $k^{th}$-order CLA and rational choice becomes rather negligible. To summarize the results in Tables \ref{tab:PassRates} and \ref{tab:RevaledPreferenceDensity}, we observe a trade-off between the descriptive power and welfare content as a function of DM's attention floor. This trade-off appears to be greatly favorable in the first steps (from the first-order up to the forth-order) as significant boosts in welfare content are achieved with not much loss in descriptive power. The trade-off becomes less appealing as we move to higher orders.

\begin{table}[htb]
\centering
\begin{tabular}{c|c|ccccc}
\toprule
                           & \multirow{2}{*}{\begin{tabular}[c]{@{}c@{}}Mean\\ (Standard Deviation)\end{tabular}} & \multicolumn{5}{c}{Quartiles of the Distribution} \\
                           &                                                                                      & min      & 25\%      & 50\%     & 75\%     & max     \\ \hline
$\vert \Gamma \vert \ge 1$ & \begin{tabular}[c]{@{}c@{}}0.44\\ (0.67)\end{tabular}                                & 0        & 0        & 0       & 1       & 4       \\
$\vert \Gamma \vert \ge 2$ & \begin{tabular}[c]{@{}c@{}}4.06\\ (0.90)\end{tabular}                                & 2        & 4        & 4       & 4       & 7       \\
$\vert \Gamma \vert \ge 3$ & \begin{tabular}[c]{@{}c@{}}5.59\\ (0.87)\end{tabular}                                & 4        & 5        & 5       & 6       & 8       \\
$\vert \Gamma \vert \ge 4$ & \begin{tabular}[c]{@{}c@{}}7.03\\ (0.99)\end{tabular}                                & 5        & 6      & 7       & 8       & 8       \\
$\vert \Gamma \vert \ge 5$ & \begin{tabular}[c]{@{}c@{}}8.58\\ (0.55)\end{tabular}                                & 7        & 8        & 9       & 9       & 9       \\
$\vert \Gamma \vert \ge 6$ & \begin{tabular}[c]{@{}c@{}}8.67\\ (0.48)\end{tabular}                                & 8        & 8        & 9       & 9       & 9       \\
$\vert \Gamma \vert \ge 7$ & \begin{tabular}[c]{@{}c@{}}8.98\\ (0.15)\end{tabular}                                & 8        & 9        & 9       & 9       & 9       \\
$\vert \Gamma \vert \ge 8$ & \begin{tabular}[c]{@{}c@{}}9.00\\ (0.00)\end{tabular}                                   & 9        & 9        & 9       & 9       & 9       \\ \bottomrule
\end{tabular}
\caption{Maximal lower contour set. Hence, 0 means that the alternative with the largest lower contour set might be the worst, and 9 means that it is actually the best. Numbers are only computed for the subjects consistent with the given theory.}
\label{tab:Welfare}
\end{table}



Next we go to Table \ref{tab:Welfare} that presents results from the size of the maximal lower contour sets across the theories. Let us first note that the experiment is \say{rich enough} in this dimension; that is, if we assume that $\vert \Gamma\vert \ge 8$ (rational choice), then the alternative with the largest lower counter set would indeed be the best alternative. Hence, under the assumption of full attention we can guarantee the welfare optimum using the data set from the experiment. In particular, note that this is despite the fact that only 63\% of the total number of rankings is revealed under full attention. (Recall Table \ref{tab:RevaledPreferenceDensity}) This, therefore, provides evidence for our earlier argument on the appeal of our second measure whenever the observer is interested in identifying the best alternative.

On the other extreme, the generic CLA model ($\vert \Gamma\vert \ge 1$) does not help us to elicit any comparisons for 63\% of the sample. Following our discussion in the introduction, we should note that the first systemic channel that denies welfare relevance for the CLA model -- subjects who are rational -- only covers 33\% of this. The remaining 30\% of subjects are cases of the second channel: those whose choice violate the standard axioms, but not in a pattern that helps us to retrieve preference under the generic CLA model.

As reported in Table \ref{tab:Welfare}, for the first-order CLA model, on average, we cannot even guarantee that the alternative that has the largest lower counter set in $P^*$ is the second worst. Looking across subjects indeed, the best case for this order of the model is where this alternative is better than four others. On the other hand, with attention floor set to 2 ($\vert \Gamma\vert \ge 2$), we gain a significant boost in welfare relevance using this measure, guaranteeing that such an alternative is, on average, better than at least four others. In the best case scenario across subjects at this order this alternative is at least the third-best. Moreover, we see that, at this order of the model, the size of the maximal lower contour set for any given alternative is strictly better for \textit{all} subjects compared to the basic CLA model.
Further increasing $k$, we see that the size of the maximal lower contour set when the attention floor is set to 3 is strictly better than the size of the maximal lower contour set when the attention floor is 2 for 97 subjects out of 99 (98\%) who are consistent with the former floor level. 

It is also seen in Table \ref{tab:Welfare} that the marginal gain in the the size of the maximal lower contour set rapidly decreases as the attention floor grows. This reduction in the marginal gain is most visible after $\vert \Gamma\vert \ge 5$. Finally, in regards to theories with seemingly the most reasonable trade-offs between descriptive power and welfare relevance, ($\vert \Gamma\vert \ge 2$ and $\vert \Gamma\vert \ge 3$), it is observed that the alternative with the maximal lower contour set is, on average, better than 4 (for $\vert \Gamma\vert \ge 2$) and 5 (for $\vert \Gamma\vert \ge 3$) others.\footnote{
    Note that when the attention floor is set at a number above 1, any doubletone budget reveals preference. This may suggest that the boost in welfare relevance in this experiment after imposing attention floors is mainly due to inclusion of pairs in the experiment. Importantly, this is not the case and the results in this section are robust to removing the doubletones from the experiment. This instead suggests that the indirect revelation yielded by our algorithm via combining certain and uncertain revealed preference information play a major role in a model with an attention floor.}

\subsection{Approximation of the Revealed Preference Relation}
Even though the algorithm \ref{algo:EstimatingPreferences} is computationally efficient in our application, in general it might take non-trivial time 
to converge. Thus, for scenarios with larger universal sets of alternatives and larger experiments one might prefer to use one of the computationally efficient bounds we offered. Note that the lower bound ($\ubar P$) is the most efficient one, as it is a simple iterative algorithm that does not require solving MIPs. Here we perform an analysis on how good of approximations these bounds can be for the the revealed preference relation.

\begin{table}[htb]
\centering
\begin{tabular}{c|ccc|c}
\toprule
                & \multicolumn{3}{c|}{\begin{tabular}[c]{@{}c@{}} Maximal Lower Contour Set \\ \hline  Mean\\ (Standard Deviation)\end{tabular}}                                                                                             & {\begin{tabular}[c]{@{}c@{}}
                $|P^*\setminus \ubar P|$ \\ \hline  Mean\\ (Standard Deviation) \end{tabular}} \\
      & $\ubar P$   & $P^*$        & $\bar P$    &   \\ \midrule
$|\Gamma|\ge 1$ & \begin{tabular}[c]{@{}c@{}}0.44\\ (0.67)\end{tabular} & \begin{tabular}[c]{@{}c@{}}0.44\\ (0.67)\end{tabular} & \begin{tabular}[c]{@{}c@{}}0.64\\ (0.89)\end{tabular} 
& \begin{tabular}[c]{@{}c@{}}0.02\\ (0.28)\end{tabular} \\
$|\Gamma|\ge 2$ & \begin{tabular}[c]{@{}c@{}}4.06\\ (0.90)\end{tabular} & \begin{tabular}[c]{@{}c@{}}4.06\\ (0.90)\end{tabular} & \begin{tabular}[c]{@{}c@{}}4.34\\ (0.98)\end{tabular} 
& \begin{tabular}[c]{@{}c@{}}0.26\\ (0.71)\end{tabular} \\
$|\Gamma|\ge 3$ & \begin{tabular}[c]{@{}c@{}}5.57\\ (0.85)\end{tabular} & \begin{tabular}[c]{@{}c@{}}5.59\\ (0.87)\end{tabular} & \begin{tabular}[c]{@{}c@{}}5.90\\ (0.95)\end{tabular} 
& \begin{tabular}[c]{@{}c@{}}0.71\\(1.78)\end{tabular}    \\
$|\Gamma|\ge 4$ & \begin{tabular}[c]{@{}c@{}}7.01\\ (0.99)\end{tabular} & \begin{tabular}[c]{@{}c@{}}7.03\\ (0.99)\end{tabular} & \begin{tabular}[c]{@{}c@{}}7.26\\ (0.86)\end{tabular} 
& \begin{tabular}[c]{@{}c@{}}0.56\\ (1.48)\end{tabular}     \\
$|\Gamma|\ge 5$ & \begin{tabular}[c]{@{}c@{}}8.58\\ (0.55)\end{tabular} & \begin{tabular}[c]{@{}c@{}}8.58\\ (0.55)\end{tabular} & \begin{tabular}[c]{@{}c@{}}8.58\\ (0.55)\end{tabular} 
& \begin{tabular}[c]{@{}c@{}}0.38\\ (1.48)\end{tabular}    \\
$|\Gamma|\ge 6$ & \begin{tabular}[c]{@{}c@{}}8.67\\ (0.48)\end{tabular} & \begin{tabular}[c]{@{}c@{}}8.67\\ (0.48)\end{tabular} & \begin{tabular}[c]{@{}c@{}}8.67\\ (0.48)\end{tabular} & 
\begin{tabular}[c]{@{}c@{}}0.30\\ (1.31)\end{tabular} \\
$|\Gamma|\ge 7$ & \begin{tabular}[c]{@{}c@{}}8.98\\ (0.15)\end{tabular} & \begin{tabular}[c]{@{}c@{}}8.98\\ (0.15)\end{tabular} & \begin{tabular}[c]{@{}c@{}}8.98\\ (0.15)\end{tabular} &
\begin{tabular}[c]{@{}c@{}}0.36\\ (1.68)\end{tabular}  \\
$|\Gamma|\ge 8$ & \begin{tabular}[c]{@{}c@{}}9.00\\ (0.00)\end{tabular} & \begin{tabular}[c]{@{}c@{}}9.00\\ (0.00)\end{tabular} & \begin{tabular}[c]{@{}c@{}}9.00\\ (0.00)\end{tabular} & 0                        \\ \bottomrule                                                           
\end{tabular}
\caption{
Approximating the revealed preference relation.
The left panel shows results in terms of the maximal lower contour sets.
The right panel shows the difference in the number of comparisons that are different.
Numbers are only computed for the subjects consistent with the given theory.
}

\label{tab:WelfareBounds}
\end{table}

\noindent
Table \ref{tab:WelfareBounds} presents the results on how well we can approximate the revealed preference relation using the bounds constructed.
The left panel presents the results in terms of the maximal lower contour set.
In particular, it presents  the average maximal size of the lower contour set with standard deviation in the parenthesis. 
It is evident that the largest difference between the two bounds is 0.33 which happens for  $|\Gamma|\ge 3$. Starting from $k=5$ the gap between the bound disappears ($\bar P = \ubar P$).
Let us note that the difference between $\ubar P$ and $P^*$ never exceeds $0.02$ on average.
This may be evidence that $\ubar P$ is a better approximation of the revealed preference relation ($P^*$).

The right panel shows the difference between $P^*$ and $\ubar P$ measured in terms of the number of comparisons that are present in $P^*$ but not in $\ubar P$, averaged across subjects. First, we see that the average number of different comparisons never exceeds 1. Moreover, we know that the difference only appears for 1 out of 113 subjects for the case of the generic CLA ($|\Gamma|>1$) and for 9 subjects out of 113 for when $|\Gamma|\ge 2$. For $3^{rd}$ and $4^{th}$-order theories the numbers account for 16\% and 14\% of consistent subjects correspondingly. Normalizing these numbers by the densities of the given preference relations we know that for the generic CLA it accounts for 4\% of the density and for the $2^{nd}$-order CLA it does for 3\% of the average density. For the $3^{rd}$ and $4^{th}$-order theories the difference accounts for 5\% and 3\% of the density.
Thus, we can overall conclude that $\ubar P$ is a good enough approximation of $P^*$ regardless of the metrics one wants to use.

\section{Conclusion}\label{sec:cnc}
In this paper, we provide a model of choice that enables an observer to draw conclusions about DM's preferences while assuming that the decision-maker suffers from inattention. We do this by imposing an \textit{attention floor} in the decision process; equivalently, assuming that DM makes at least one comparison in the process. We provide a characterization of our model based on both complete and incomplete data sets. Since we face the issue of multi-representation, the revealed preference relation has the following definition: $x$ is revealed better than $y$ if and only if $x$ is better than $y$ for every preference relation that could have generated the observed data. To evaluate the welfare relevance, we develop an algorithm to recover the revealed preference relation.


We apply our model to an experiment with an incomplete data set to evaluate the trade-off between the descriptive power and the welfare relevance of different attention floors.  
We show that the original CLA model does not help much to recover the revealed preference relation. 
Nonetheless, adding the minimal assumption that at least one comparison is made before finalizing decisions significantly improves the welfare relevance. 
In particular, this amended model allows us to recover about one-third of the preference relation that would be recovered under the full attention assumption. 
Interestingly, we find that this minimal assumption is almost costless in terms of descriptive power.  

Let us note that recovering preferences in this class of models is complicated due to their two-stage nature (shortlist-then-choose) that generates ``or-logic''. In such models revealed preferences could be of the form: $x$ is better than $y$ or $z$ is better than $w$. This type of revealed preference information is not restricted to the CLA model and can arise in other two-stage models \citep[see for instance][]{manzini2007sequentially,au2011sequentially,lleras2017more}. The efficacy of these models in explaining the observed data has been documented \citep[see][]{inoue2018limited}. In our opinion, a next important step is to investigate the welfare relevance of these models. 

\bibliographystyle{plainnat}
\bibliography{refs}

\clearpage
\appendix

\section{Proofs}

\subsection{Proof of Theorem \ref{thm:fulldata}}

\begin{proof}
($\Leftarrow$) Define $x \succ y \iff x = \C\{x,y\}$. As discussed before, SARP$^k$ implies no binary cycles on pairs. This implies that $\succ$ is a preference relation. Next, for $S$ define $\Gamma(S) = \C(S) \cup \{y \in S: x = \C\{x,y\}\}$. It follows from the definition that $\Gamma$ is an attention filter, and also that $\C(S) = \lmax{\succ}{\Gamma(S)}$ for all $S$. Hence, it only remains to prove that $|\Gamma(S)| \geq \min \{|S|,k\}$, for all $S$. Take $S: |S| \leq k$ and let $x = \C(S)$. We need to show $\Gamma(S) = S$. SARP$^k$ implies that $x \in \C \{x,y\}$ for all $y \in S$. Hence the desired conclusion follows. 

To finish the proof of this direction, take $S: |S|>k$. We need to show that $|\Gamma(S)| \geq k$. 
Assume by contradiction that $|\Gamma(S)| < k$. Since $x = \C(S) \in \Gamma(S)$ by definition, then let $\Gamma(S) = \{x, y_1,y_2,\ldots,y_n\}: n\leq k-2$. 
Note that $x = \C\{x,y_i\}$. Also let $\Gamma(S)\setminus S = A = \{z_1,z_2,\ldots, z_m\}$ be the set of elements DM does not pay attention to in $S$.
Note that $z_i = \C\{x,z_i\}$. 
Thus, by WARP(LA$^k$), we can remove $z_i$ from $S$, one by one, while preserving $x$ as choice. 
Thus, we conclude that $x = \C\{x,y_1,y_2, \ldots,y_n,z^*\}$, for $z^* \in \{z_1,z_2,\ldots,z_m\}$. 
Since $n \leq k-2$, then $|\{x,y_1,y_2,\ldots,y_n,z^*\}| \leq k$. Thus SARP$^k$ implies that $x \in \C \{x,z^*\}$ which is a contradiction. 

($\Rightarrow$) Assume that $\C$ is a $k^{th}$-order CLA with the preference relation $\succ$. To establish SARP$^k$, assume that $x$ is $k^{th}$-order indirectly chosen over $y$. Since DM's attention span is at least $k$, we conclude that $x \succ y$. Now consider a choice problem containing no more than $k$ elements and that includes both $x$ and $y$. Note that DM is fully attentive in such a problem and thus, since $x \succ y$, she will not choose $y$ over $x$. That is, $y$ is not $k^{th}$-order directly chosen over $x$. This establishes SARP$^k$.

To establish WARP(LA$^k$), take $x,y \in S \cap T$ with $|T| \leq k$, $x = \C (S)$ an $y = \C(T)$. Since DM is fully attentive in $T$, it follows that $y \succ x$. Since $x$ is chosen over $y$ in $S$-- an alternative that is preferred to $x$ -- it has to be the case that DM does not pay attention to $y$ in $S$; that is, $y \notin \Gamma(S)$. Since $\Gamma$ is an attention filter, then removing $y$ should not affect DM's attention set. That is, $\Gamma(S\setminus \{y\}) = \Gamma(S)$. Since $x$ was a maximum on $\Gamma(S)$, then it is also on $\Gamma(S\setminus \{y\})$. That is, $x = \C(S\setminus \{y\})$. This finishes the proof of this direction.
\end{proof}




\subsection{Proof of Theorem \ref{prop:RationalizabilityIncompleteExperiment}}
\begin{proof}
{\bf ($\Rightarrow$)}
Consider a rationalizable data set.
That is, there is an underlying preference relation $\succ$ and attention filter $\Gamma$ (where $\vert \Gamma(B) \vert \ge k$) such that
$$
\C(B) \succ y \text{ for every } y \in \Gamma(B).
$$
Let us show that the system \eqref{eq:MIP} is satisfied.
Given that $\succ$ is complete and transitive binary relation on no more than countable set of alternatives there is a utility function $u: X\rightarrow \R$ representing the $\succ$ (See \cite{rader1963existence}).
Moreover, this utility function is bounded.
Then, we can assign
$$
u_i = u(\C(B_i))
$$
for every $B_i\in \gB$.
Next, we define the $\delta_{ij}$ using the attention filter.
Let
\begin{equation*}
\delta_{ij} = 
    \begin{cases}
        1 & \text{ if } x_j \in \Gamma(B_i) \\
        0 & \text{ otherwise }
    \end{cases}
\end{equation*}
for every $x_j \in X$.
Then,
$$
\delta_{ij} = 0 \text{ for every } x_j \notin B_j
$$
is satisfied because $\Gamma(B_i)\subseteq B_i$, and 
$$
\sum\limits_{j\ne i} \delta_{ij} \ge \min\lbrace k,k_i\rbrace
$$
is satisfied by construction since $\vert \Gamma (B_i) \vert \ge  \min\lbrace k,k_i\rbrace$.
Next, we show that 
$$
u_i > u_j - (1-\delta_{ij})M  \text{ if } x_j \in B_i
$$
is satisfied
If $x_j \notin \Gamma(B_i)$, then $\delta_{ij} = 0$ and given that $u$ is bounded and $M\rightarrow \infty$.
If $x_j \in \Gamma(B_i)$, then $x_i=\C(B_i) \succ x_j$ (given that $(\succ,\Gamma)$ rationalizes the data set), and therefore $u_i> u_j$ since $u(x)$ represents $\succ$.
Finally, we show that 
$$
\sum\limits_{x_k \in B_i\setminus B_j}(\delta_{ik} + \sum\limits_{x_k \in B_j\setminus B_i}\delta_{jk} \ge 1 \text{ if } \C(B_i)\ne \C(B_j), \ \C(B_i),\C(B_j)\in B_i\cap B_j
$$
is satisfied.
Assume on the contrary that regardless 
$$
\C(B_i)\ne \C(B_j) \text{ and } \C(B_i),\C(B_j)\in B_i\cap B_j
$$
there $\Gamma(B_i)\cap (B_i\setminus B_j) = \emptyset$ and $\Gamma(B_j)\cap (B_j\setminus B_i) = \emptyset$.
Then, $\Gamma(B_i\cap B_j) = \Gamma(B_i)$ and $\Gamma(B_i\cap B_j) = \Gamma(B_j)$ since $\Gamma$ is an attention filter (removing element not in attention filter does not alter the filter).
Thus, we have $\Gamma(B_i) = \Gamma(B_j)$ and given that we have defined a strict preference relation equality of attention filters implies that $\C(B_i)=\C(B_j)$, that is a contradiction.

\bigskip

\noindent
{\bf ($\Leftarrow$)}
We start from showing that if there is a solution to system \eqref{eq:MIP}, then we can construct a revealed preference relation that is acyclic and thus can be completed.
Let $\hat \Gamma(B_i)$ be a temporary attention filter, that is $x\in \Gamma(B_i)$ if and only if $\delta_{ij} = 1$.
Let $\;\hat\succ\;$ be a revealed preference relation, that is, $x_i \;\;\hat\succ\;\; x_j$ if $x_i\in \C(B_i)$ and $x_j \in \hat\Gamma(B_i)$.

\medskip

\noindent
{\bf Claim 1.} If $x_i \;\;\hat\succ\;\; x_j$ then $u_i > u_j$.

\smallskip

\noindent
Assume that $x_i\;\hat\succ\; x_j$, then $x_i\in \C(B_i)$ and $x_j\in \hat\Gamma(B_i)$.
The latter implies that $\delta_{ij}=0$ by construction of $\hat\Gamma(B_i)$.
Then inequality $u_i>u_j - \delta_{ij}M$ has to hold as $u_i>u_j$.

\medskip

\noindent
Claim 1 links the constructed revealed preference relation to the solution of the system \eqref{eq:MIP}.
Let $\succ^{tc}$ be a transitive closure, that is $(x,y)\in \;\succ^{tc}$ if there is a sequence $x=x_1,\ldots,x_n=y$ such that $(x_j,x_{j+1})\in\; \succ$ for every $j\le n-1$.
In order to show that there is a complete extension of $\;\hat\succ\;$ it is enough to show that $\hat\succ^{tc}$ \emph{extends} $\;\hat\succ\;$.
That is, there are not $z,w\in X$ such that $(z,w)\in \;\hat\succ\;$ and $(w,z)\in \hat\succ^{tc}$.
Since $(w,z)\in \hat\succ^{tc}$, then there is a sequence $w=x_1,\ldots,x_n=z$ such that $(x_j,x_{j+1})\in \succ$ for every $j\le n-1$.
Then using Claim 1 we know that $u_j> u_{j+1}$ and given the transitivity of $>$ relation we know that $u_{x_1} > u_{x_n}$.
At the same time Claim 1 implies that $u_{x_n} > u_{x_1}$ since $z=x_n \;\hat\succ\; x_1 = w$.
That is a contradiction. 

\medskip

\noindent
Thus, there is a complete and transitive extension of $\;\hat\succ\;$, we denote by $\succ^*$.
Given that $\succ^*$ extends $\;\hat\succ\;$ we know that $x\;\hat \succ\; y$, then $x\succ; y$.
Thus, we are left to construct the attention filter.
If $B\in\gB$ let
$$
\Gamma(B) = \lbrace y\in B: \C(B)\succ^* y  \rbrace \cup \lbrace \C(B) \rbrace.
$$
Thus already constructed $\Gamma(B)$ allow to show that the data is rationalized, given that $\succ^*$ extends $\;\hat\succ\;$, since $\hat\Gamma(B)\subseteq \Gamma(B)$ for every $B\in \gB$.
By construction $\vert \hat \Gamma(B_i)\vert \ge \min\lbrace k_i,k\rbrace$, thus $\Gamma(B_i)\ge  \min\lbrace k_i,k\rbrace$.
Next, we extend the attention filter to $S\in 2^X\setminus \gB$.
Let 
$$
\hat \Gamma(S) = \bigcup_{B\in \gB: S\subseteq B; \Gamma(B)\subseteq S} \Gamma(B)
$$
be the candidate attention filter.
However, the candidate attention filter might be empty, thus we need to consider a different construct for that case.
Let 
\begin{equation*}
    \Gamma(S) = 
    \begin{cases}
    \hat\Gamma(S)   & \text{ if } \hat\Gamma(S)\ne \emptyset \\
    S               & \text{ otherwise }
    \end{cases}
\end{equation*}
for every $S\in 2^X \setminus \gB$.
Note that $\vert \Gamma(S) \vert \ge \min \lbrace k_i,k\rbrace$ in the case then it is the entire set (if $\vert\hat\Gamma(S)\vert=\emptyset$).
If $\vert\hat\Gamma(S)\vert\ne \emptyset$, then we need to consider two cases.

\medskip

\noindent
{\bf Case 1: $\vert S\vert \le k$.}
Suppose that $\vert \Gamma(S) \vert < k$.
If $\hat\Gamma(S)\ne \emptyset$, then there is $B\in \gB$ such that $\Gamma(B)\subseteq S\subseteq B$.
Note that $\Gamma(B)\ge \lbrace k_i,k\rbrace$ by construction, thus $\Gamma(B)=S=B$ that is a contradiction to $S\notin \gB$.

\medskip

\noindent
{\bf Case 2: $\vert S\vert > k$.}
Then for every $B\in \gB$ such that $S\subseteq B$ we know that $\vert B\vert >k$ and thus $\vert\Gamma(B)\vert \ge k$.
Hence, $\vert \Gamma(S) \vert \ge k$ since $\Gamma(B)\subseteq \Gamma(S)$.

\medskip
Finally, we are left to show that $\Gamma(S)$ is an \emph{attention filter}.
That is, if we take $S\in 2^X$ and $x\notin \Gamma(S)$, then $\Gamma(S\setminus\lbrace x\rbrace) = \Gamma (S)$.
Let $S'=S\setminus \lbrace x\rbrace$, for the further simplicity of the notation.
In order to finish the proof we need to consider the following four cases.

\medskip

\noindent
{\bf Case 1: $S\notin \gB; S'\notin \gB$.}
Take $y\in \Gamma(S)$, then there is $B\in \gB$ such that $\Gamma(B)\subseteq S\subseteq B$.
Given that $S'\subset S$, we know that $S'\subset B$.
Since $x\notin \Gamma(S)$, then $x\notin \Gamma(B)$, thus $\Gamma(B)\subseteq S'$ and thus $\Gamma(B)\subseteq \Gamma(S')$.
Take $y\in \Gamma(S')$,  then there is $B\in \gB$ such that $\Gamma(B)\subseteq S'\subseteq B$.
Assume on the contrary that $y\notin \Gamma(S)$, thus for all $B'\in \gB$ such that $\Gamma( B')\subseteq S\subseteq B'$, then $y\notin \Gamma(B')$.
By construction we know that $\C(B)\ne \C(B')$, then there is $z\in (\Gamma(B)\setminus B')\cup(\Gamma(B')\setminus B)$.
If $z\in \Gamma(B)\setminus B'$, then $z\in S'\subset S$, that is a contradiction to $S\subseteq B'$.
If $z\in \Gamma(B')\setminus B$, then $z\in S$ and $z\ne x$, thus $z\in S'$ that is contradiction to $z\subseteq B$.

\medskip

\noindent
{\bf Case 2: $S\in \gB$, $S'\in \gB$.}
By construction $y\in \Gamma(S)$ if and only if $\C(S)\succ^* y$.
Thus $\Gamma(S)=\Gamma(S')$, since $x\notin \Gamma(S)$.

\medskip

\noindent
{\bf Case 3: $S\in\gB$, $S'\notin \gB$.}
$x\notin \Gamma(S)$, then $\Gamma(S)\subseteq S'\subseteq S$.
Thus by construction $\Gamma(S)\subseteq \Gamma(S')$.
Next we show that $\Gamma(S')\subseteq \Gamma(S)$.
Take $y\in \Gamma(S')$ and assume that $y\notin \Gamma(S)$.
Then, there is $B\in \gB$ such that $\Gamma(B)\subseteq S'\subseteq B$ and $y\in \Gamma(B)$.
Thus, either $y\in \Gamma(S)$ that is a direct contradiction or $\C(B)\ne \C(S)$.
The latter implies that $\Gamma(B)\setminus S\ne\emptyset$, that is a contradiction to $\Gamma(B)\subseteq S'\subset S$.

\medskip

\noindent
{\bf Case 4: $S\notin \gB$, $S'\in \gB$.}
Take $y\in \Gamma(S)$, then there is $B\in\gB$ such that $\Gamma(B)\subseteq S\subseteq B$.
Thus, $\Gamma(B)\subseteq S'$ since $x\notin \Gamma(B)$ as $x\notin \Gamma(S)$.
Assume that $y\notin \Gamma(S')$, then $\C(S')\ne \C(B)$ and thus $\Gamma(B)\setminus S'\ne \emptyset$, that is a contradiction.
Take $y\in \Gamma(S')$ and assume that $y\notin \Gamma(S)$.
Then, for every $B\in\gB$ such that $\Gamma(B)\subseteq S\subseteq B$, $y\notin \Gamma(B)$.
Since $x\notin \Gamma(S)$, then $\Gamma(B)\subseteq S'$.
At the same time $y\in \Gamma(S')$ and $y\notin \Gamma(B)$ implies that $\C(B)\ne\C(S')$.
The latter implies that there is $z\in \Gamma(B)\setminus S'$ that is a contradiction to $\Gamma(B)\subseteq S'$.

\end{proof}

\subsection{Proof of Theorem \ref{thm:CorrectAlgorithm}}
Before we proceed with the proof we need to introduce some auxiliary notation.
Recall that we have defined $\cM(\gB,\C)$ as the set of pairs $(\succ, \Gamma)$ that rationalize the data set.
To proceed with the proof we define the
$$
\cM(\gB,\C,P) = \{(\succ, \Gamma) \in \cM:  P\subseteq \succ \}.
$$
That is a set of potential rationalizations of the data that extend a given binary relation $P$.

\begin{proof}
To complete the proof we need to state two supplementary claims.

\medskip

\noindent
\textbf{Claim 1:} $\ubar P \subseteq P^* \subseteq \bar P$.

\smallskip

\noindent
$\ubar P\subseteq P^*$ is simply by construction.
$P^* \subseteq \bar P$ is implied by the fact that $P^*\subseteq \succ$ for every $(\succ,\Gamma)\in \cM(\gB,\C)$.
Thus, $(x,y) \in P^*$ implies that $(x,y) \in \bar P$.

\bigskip

\noindent
\textbf{Claim 2:} $\cM(\gB,\C) = \cM(\gB,\C,P)$ for every $P\subseteq P^*$.

\smallskip

\noindent
First, we show that $\cM(\gB,\C) = \cM(\gB,\C,P^*)$.
By definition we know that $\cM(\gB,\C,P^*) \subseteq \cM(\gB,\C)$.
Thus, we are left to show that $\cM(\gB,\C) \subseteq \cM(\gB,\C,P^*)$.
Take $(\succ,\Gamma) \in \cM(\gB,\C)$, but by construction we know that $P^*\subseteq \succ$ and thus $(\succ,\Gamma) \in \cM(\gB,\C,P^*)$.
Now we consider the $P\subseteq P^*$.
We know that $\cM(\gB,\C,P) \subseteq \cM(\gB,\C)$, thus we are left to show that $\cM(\gB,\C) \subseteq \cM(\gB,\C,P)$.
We know that $\cM(\gB,\C) \subseteq \cM(\gB,\C,P^*)$ and $\cM(\gB,\C,P^*) \subseteq \cM(\gB,\C,P)$, then $\cM(\gB,\C) \subseteq \cM(\gB,\C,P)$.

\bigskip

\noindent
The first part of Claim 1 implies that we can initialize $P^*$ with $\ubar P$.
The second part of Claim 1 implies that it is enough to consider $(x,y) \in \triangle P$ as the potential candidates for $P^*$.
Thus, we are left to show that for $P\subseteq P^*$ if $\cM(\gB,\C)\neq \emptyset$ and $\cM(\gB,\C,P\cup \{(y,x)\}) =\emptyset$, then $(x,y)\in P^*$.
Suppose on the contrary $(x,y)\notin P^*$, then there is $y \succ x$ for $(\succ,\Gamma)\in \cM(\gB,\C) = \cM(\gB,\C,P)$.
Then, $P\cup \{(y,x)\} \subseteq \succ$ and thus $(\succ,\Gamma)\in \cM(\gB,\C,P\cup \{(y,x)\})$ that is a contradiction.

\end{proof}

\end{document}